\algrenewcommand\algorithmicrequire{\textbf{Input:}}
\algrenewcommand\algorithmicensure{\textbf{Output:}}
\newtheorem{theorem}{Theorem}[section]
\newtheorem{proposition}[theorem]{Proposition}
\title{\LARGE \bf Sparse Dynamic Network Reconstruction Through $L_1$-regularization of a Lyapunov Equation}
\author{
    Ian Xul Belaustegui$^{a}$,
    Marcela Ordorica Arango$^{a}$,
    Roman Rossi-Pool$^b$, \\
    Naomi Ehrich Leonard$^{a}$,
    Alessio Franci$^{c}$
    \thanks{$^{a}$ Department of Mechanical and Aerospace Engineering, Princeton University; \texttt{\{ianxul,m.ordorica,naomi\}@princeton.edu}}
    \thanks{$^{b}$ Instituto de Fisiología Celular - Neurociencias and Centro de Ciencias de la Complejidad, Universidad Nacional Autónoma de México, México; \texttt{romanr@ifc.unam.mx}}
    \thanks{$^{c}$ Department of Electrical Engineering and Computer Science, University of Liege, and WEL Research Institute, Wavre, Belgium; \texttt{afranci@uliege.be}}
    \thanks{This research was supported in part by AFOSR grant FA9550-24-1-0002.}
}
\begin{document}

\maketitle
\thispagestyle{empty}
\pagestyle{empty} 


\begin{abstract}
    An important problem in many areas of science is that of recovering interaction networks from high-dimensional time-series of many interacting dynamical processes. A common approach is to use the elements of the correlation matrix or its inverse as proxies of the interaction strengths, but the reconstructed networks are necessarily undirected. Transfer entropy methods have been proposed to reconstruct directed networks, but the reconstructed network lacks information about interaction strengths. We propose a network reconstruction method that inherits the best of the two approaches by reconstructing a directed weighted network from noisy data under the assumption that the network is sparse and the dynamics are governed by a linear (or weakly-nonlinear) stochastic dynamical system. The two steps of our method are {\it i)} constructing an (infinite) family of candidate networks by solving the covariance matrix Lyapunov equation for the state matrix and {\it ii)} using $L_1$-regularization to select a sparse solution. We further show how to use prior information on the (non)existence of a few directed edges to dramatically improve the quality of the reconstruction.
\end{abstract}


\section{Introduction}
\label{sec:introduction}

In many areas of science it is common to record the evolution in time of many interacting processes. A natural and important problem is then to develop data-driven methods that can recover, or reconstruct, from the resulting high-dimensional time-series the underlying interaction network. Ideally, the reconstructed network would include information about both the direction and the strength of the interactions.
Interaction networks reconstruction has applications, e.g., in neuroscience (between neurons or neuronal populations),
in ecology (i.e., between species populations in a community), in molecular biology (i.e., between  expression of genes), and in social sciences (i.e., between opinionated individuals).

An established approach, commonly employed in neuroscience applications~\cite{Liégeois_Santos_Matta_Ville_Sayed_2020,Lin_Das_Krishnan_Bazhenov_Sejnowski_2017,Mohanty_Sethares_Nair_Prabhakaran_2020}, is to use correlation measures to approximate interaction networks through the entries of the covariance matrix or its inverse, the precision matrix. Networks reconstructed from the precision matrix were found to better correlate to the underlying anatomical network~\cite{Liégeois_Santos_Matta_Ville_Sayed_2020}. However, whether using the correlation or the precision matrix, the reconstructed network will necessarily be undirected because both matrices are symmetric.

In order to reconstruct directed network interactions, transfer entropy (TE)~\cite{Schreiber_2000} has emerged as a powerful information-theoretical measure of directed information transfer between dynamical processes, with many applications in neuroscience \cite{Lobier_Siebenhühner_Palva_Palva_2014,Ursino_Ricci_Magosso_2020,Novelli_Wollstadt_Mediano_Wibral_Lizier_2019, Timme_Lapish_2018,Sharma_Bucci_Brahma_Varshney_2017}. However, transfer entropy methods can only reconstruct unweighted directed interaction networks, i.e., they can only provide information about whether a process directly affects (or not) another but not the strength or sign of this directed interaction.

A commonly used, mathematically tractable approximation is to assume that the network dynamics can be modelled by a linear (or weakly nonlinear) stochastic differential equation driven by a white noise vector. When used to describe resting-state brain activity, the resulting model is known as the (linear) Dynamical Causal Model (DCM)~\cite{friston2003dynamic,casti2023dynamic}. Under this assumption, the covariance matrix of the model is the solution to a Lyapunov equation that links it to the state matrix of the network~\cite{socha2007linearization, appliedSDEs}. If the state matrix is assumed to be symmetric (i.e., the network is undirected) and the noise components are uncorrelated, then solving the Lyapunov equation for the state matrix returns exactly the precision matrix~\cite{Liégeois_Santos_Matta_Ville_Sayed_2020}. But if no symmetry assumptions are made, then solving the Lyapunov equation for the state matrix returns a whole affine subspace of candidate solutions. Methods using differential covariance~\cite{Lin_Das_Krishnan_Bazhenov_Sejnowski_2017} enable singling out asymmetric solutions, but these methods require computing the time-derivative of the measured time-series, which can problematic or unfeasible when measurement or intrinsic noise are high. 

We introduce a novel method to reconstruct weighted, thus directed, dynamical networks, using the idea that under the extra assumption that the interaction network is sparse, $L_1$-regularization can effectively select such a sparse solution from among the candidates. We also show how prior knowledge of edge existence can be included in the constraints of the resulting linear programming (LP) problem to enhance the method performance.

The first and main contribution of the paper is to formulate an $L_1$-regularization problem for sparse network reconstruction as an LP optimization problem over the affine solution space of all state matrices that solve the covariance Lyapunov equation for a given covariance matrix. Crucially, this requires the preliminary construction of an isomorphism from such a solution space to the solution space of an underdetermined linear system of equations. The second contribution is to show how to introduce prior knowledge on edge existence into the constraints of the LP problem. The third contribution is to suggest the use of a TE-based network inference algorithm as a way of obtaining some priors on edge existence and show that it significantly improves the quality of network reconstruction. 


The paper is structured as follows. Notation is introduced in Section~\ref{sec:notation}. The network model and its interpretation are introduced in Section~\ref{sec:model}. Section~\ref{sec:LyapEqSol} studies the geometry of the solution space of the covariance Lyapunov equation for the state matrix and introduces the isomorphism needed for the LP formulation of an $L_1$-regularization problem over this space. Section~\ref{sec:L1Reg} explicitly describes the LP formulation of the $L_1$-regularization problem. Section \ref{sec:AddPriors} shows how prior information about edge existence can be incorporated into the constraints of the LP problem. Section \ref{sec:TE} introduces an algorithm based on TE to provide prior knowledge on edge existence and show how to use it to constrain the LP problem. Finally, Section \ref{sec:Val} presents extensive numerical validation results obtained over large families of randomly generated sparse Hurwitz matrices. Conclusions and perspectives are presented in Section~\ref{sec: conclusions}.

\section{Notation}
\label{sec:notation}
We denote $[n]=\{1,...,n\}$. Given $n,m\in\mathbb{N}$, we write $\mathbb{R}^{m\times n}$ for the set of $m$ by $n$ matrices with real entries. For $i\in [m]$, and $j\in [n]$, the value at the $i$-th row and $j$-th column of $A\in\mathbb{R}^{m\times n}$ is $A_{i,j}$. The column-stack vectorization isomorphism $\text{vec}:\mathbb{R}^{m\times n}\to \mathbb{R}^{mn}$ is given by $(\text{vec}(A))_{m(j-1)+i}=A_{i,j}$. Vector $\mathbf{1}_{n}\in\mathbb{R}^n$ consists of $n$ entries equal to $1$ (and similarly for $\mathbf{0}_{n}$). The set of eigenvalues of $B\in\mathbb{R}^{n\times n}$ is $\sigma(B)$. The real part of $z\in\mathbb{C}$ is denoted $\Re(z)$. For a pair of matrices $A\in\mathbb{R}^{n\times n}, B\in\mathbb{R}^{n\times n}$, $\otimes$ is the Kronecker product given by $\forall r,v,s,w\in[n], (A\otimes B)_{n(r-1)+v, n(s-1)+w} = (A_{r,s})(B_{v,w})$, so that $A\otimes B\in\mathbb{R}^{n^2\times n^2}$.

\section{Model formulation}
\label{sec:model}

We consider a network model described by the linear stochastic differential equation (SDE)~\cite{appliedSDEs}
\begin{equation}
    \label{eq:lin_sys}
    \frac{dx}{dt}(t) = Ax(t) + w(t)
\end{equation}
where $w$ is a white noise vector (with noise covariance $Q=I$) and $A\in\mathbb{R}^{n\times n}$ is a Hurwitz matrix, i.e., all of its eigenvalues have a strictly negative real part. Any solution $x$ for  system \eqref{eq:lin_sys} is a stochastic process with mean $0$ and covariance matrix $\Gamma = \mathbf{E}(xx^T)$ (see for example Section $4.3$ in\cite{appliedSDEs}).

In a network modeling setting, $A$ is interpreted as a weighted and signed adjacency matrix. A non-zero element $A_{ij}\neq0$ is interpreted as the existence of a directed edge from node $j$ to node $i$. The sign of $A_{ij}$ determines the nature of the interaction (excitatory or inhibitory) and $|A_{ij}|$ determines its strength.

\section{Solving the covariance Lyapunov equation for the state matrix}
\label{sec:LyapEqSol}

\begin{figure}
    \centering
    \includegraphics[width=0.49\textwidth]{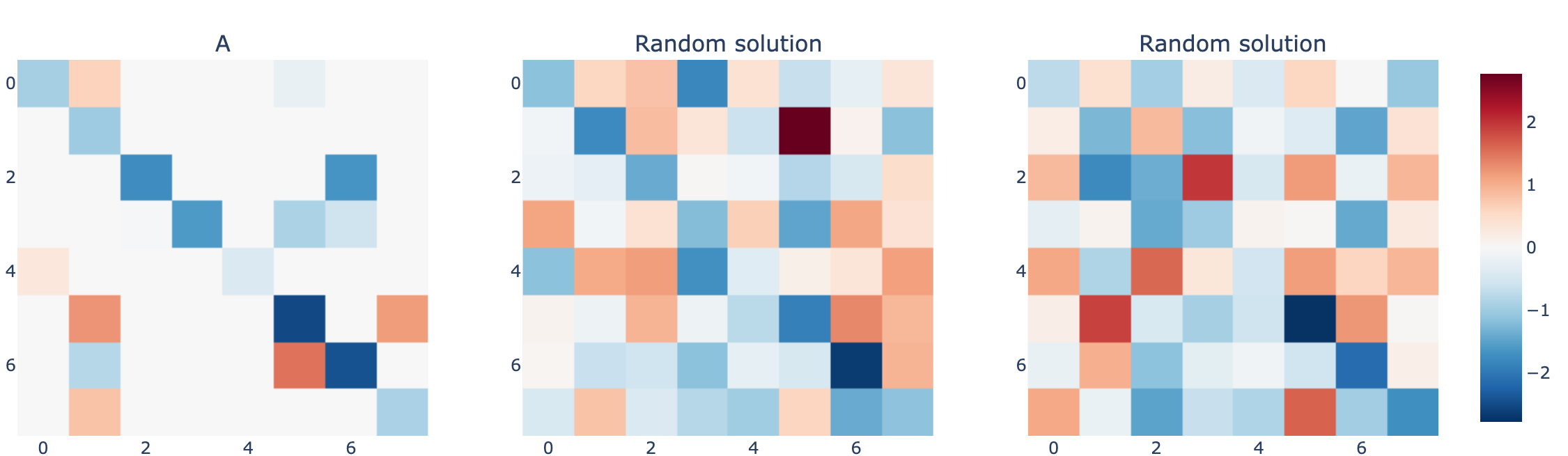}
    \caption{Heatmaps for three examples of weighted connectivity matrices of dimension $8\times 8$. The leftmost heatmap is for a sparse matrix $A\in \mathbb{R}^{8\times 8}$ generated randomly using the algorithm in Section \ref{sec:Val}. The other two matrices were selected randomly from the solution space $\mathcal{S}_\Gamma$ for the matrix $A$. This means the solution to the system in (\ref{eq:lin_sys}) has the same covariance matrix for all three matrices. 
    }
    \label{fig:rand_sols}
\end{figure}

It is known~\cite{socha2007linearization, appliedSDEs} that $A$ and $\Gamma$ satisfy the following Lyapunov equation:
\begin{equation}\label{eq:lyap_eq}
    \Gamma A^T + A \Gamma = -I.
\end{equation}

Given $A$, the stable covariance matrix solution is given by
$$
    \text{vec}(\Gamma) = -(I\otimes A + A\otimes I)^{-1}\text{vec}(I),
$$
which is readily computed. However, because the covariance matrix is symmetric, the inverse problem, i.e., solving for $A$ given $\Gamma$, does not have a unique solution~\cite{fernando1981solution}. The differential covariance approach~\cite{Lin_Das_Krishnan_Bazhenov_Sejnowski_2017} uses the fact that the expected value of $\dot x x^T$ is equal to $A\Gamma$ to obtain a unique inverse solution $A$. However, when measurement or intrinsic noise are large, computing time-series derivatives can be problematic or unfeasible.
Our goal is to propose a new algorithmic method that, under suitable assumptions, can single out a unique solution to the inverse covariance Lyapunov equation~\eqref{eq:lyap_eq}. Following \cite{fernando1981solution}, given $\Gamma$, define its orthonormal spectral decomposition:
\begin{equation}\label{eq:gamaDecomp}
    \Gamma = U  C  U^T
\end{equation}
where $U$ is an orthonormal matrix and $C$ is a diagonal matrix of entries $c_i$, the eigenvalues of $\Gamma$. Let $\bar A = U^T A U$. Then we can rewrite (\ref{eq:lyap_eq}) as
$$U^T(UCU^T \!A^T + AUCU^T)U=C\bar A^T + \bar AC=-I$$
which has solutions $\bar A$ easily shown to satisfy 
\begin{equation}\label{eq:A_cases}
    \bar A_{i,j} = \begin{cases}
            \frac{-1}{2c_i}, & \text{if } i = j \\
            -\frac{c_i}{c_j} \bar A_{j,i}, & \text{if } i \neq j
        \end{cases}.
\end{equation}

The set of matrices $\bar A$ satisfying~\eqref{eq:A_cases} is an affine subspace of dimension $m=\frac{n(n-1)}{2}$. Since $A = U \bar A  U^T$, this means that the set of possible solutions $A$ to the inverse covariance Lyapunov equations is also an $m$-dimensional affine space $\mathcal{S}_\Gamma\subseteq \mathbb{R}^{n\times n}$. Figure \ref{fig:rand_sols} shows a Hurwitz matrix $A$, with correlation matrix $\Gamma$, together with other two matrices taken randomly from the solution space $\mathcal{S}_\Gamma$, hence associated to the same covariance matrix. 

The following proposition characterizes the matrices in $\mathcal{S}_\Gamma$ with a linear equality constraint on the vectorized matrices. It is needed to formulate an $L_1$-regularization problem over $\mathcal{S}_\Gamma$ as a LP problem.
\begin{proposition}\label{prop:l1 to lp}
    Given a covariance matrix $\Gamma=UCU^T$, define a matrix $\mathbf{M}$ of dimensions $\frac{n(n+1)}{2}\times n^2$ and a vector $\mathbf b$ of dimension $\frac{n(n+1)}{2}$ with entries, respectively, given by
$$
    \mathbf M_{\iota_1(i,j),\iota_2(\ell, k)} = \begin{cases}
        c_j U_{\ell, i}U_{k,j} + c_i U_{\ell, j}U_{k,i} &\text{if } i\neq j \\
        U_{\ell, i}U_{k,j} &\text{if } i=j
    \end{cases}
$$
and
$$
    \mathbf b_{\iota_1(i,j)} = \begin{cases}
        0 &\text{if } i\neq j \\
        \frac{-1}{2c_i} &\text{if } i=j
    \end{cases}
$$
where $\iota_1:\{(i,j) \mid i,j\in [n], i \leq j\}\to[\frac{n(n+1)}{2}]$ and $\iota_2:[n]^2\to[n^2]$ are one-to-one functions that give the index of the vectorizations of the upper triangle of $\bar A$ (including the diagonal) and of the full matrix respectively. Then for $v\in\mathbb{R}^{n^2}$, we have that $\text{vec}^{-1}(v)\in \mathcal{S}_\Gamma$ if and only if $\mathbf M v = \mathbf b$. 
\end{proposition}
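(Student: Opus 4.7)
The plan is to rewrite the characterization (\ref{eq:A_cases}) of $\mathcal{S}_\Gamma$, originally phrased in the rotated coordinates $\bar A = U^T A U$, as a linear system directly in the entries of $A$, and then check row by row that the resulting system coincides with $\mathbf{M} v = \mathbf{b}$. First I would put (\ref{eq:A_cases}) into its symmetric form: the diagonal equations $\bar A_{i,i} = -\frac{1}{2c_i}$ and, for each unordered pair $i<j$, the single scalar equation $c_j \bar A_{i,j} + c_i \bar A_{j,i} = 0$ (which captures both off-diagonal cases $(i,j)$ and $(j,i)$ at once since $c_j\neq 0$). This yields exactly $\frac{n(n+1)}{2}$ scalar constraints, matching the number of rows of $\mathbf{M}$ indexed by $\iota_1$.

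Next I would substitute the orthogonal change of basis $\bar A_{i,j} = \sum_{\ell,k} U_{\ell,i} U_{k,j} A_{\ell,k} = \sum_{\ell,k} U_{\ell,i} U_{k,j}\, v_{\iota_2(\ell,k)}$ into each of these constraints. The diagonal case becomes $\sum_{\ell,k} U_{\ell,i} U_{k,i}\, v_{\iota_2(\ell,k)} = -\frac{1}{2c_i}$, whose coefficient vector is exactly the $\iota_1(i,i)$ row of $\mathbf{M}$ and whose right-hand side is $\mathbf{b}_{\iota_1(i,i)}$. The off-diagonal case becomes $\sum_{\ell,k} \bigl(c_j U_{\ell,i} U_{k,j} + c_i U_{\ell,j} U_{k,i}\bigr)\, v_{\iota_2(\ell,k)} = 0$, which matches the $\iota_1(i,j)$ row of $\mathbf{M}$ and entry $\mathbf{b}_{\iota_1(i,j)} = 0$ as defined in the proposition.

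The iff statement then follows because $U$ is orthogonal: $\text{vec}^{-1}(v) = A \in \mathcal{S}_\Gamma$ iff $\bar A = U^T A U$ satisfies (\ref{eq:A_cases}), iff the equivalent linear system in $A$ derived above holds, iff $\mathbf{M} v = \mathbf{b}$. The main (and only mild) obstacle is bookkeeping — ensuring that the map $\iota_1$ correctly enumerates the $\frac{n(n+1)}{2}$ distinct constraints without double-counting, and that the symmetric rewrite for $i<j$ faithfully captures the original off-diagonal condition in (\ref{eq:A_cases}) for both $(i,j)$ and $(j,i)$. Beyond that, the argument is a direct, linear-algebraic index-matching calculation.
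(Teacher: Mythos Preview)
Your proposal is correct and follows essentially the same route as the paper: expand $\bar A_{i,j}=\sum_{\ell,k}U_{\ell,i}U_{k,j}A_{\ell,k}$, rewrite the off-diagonal condition in~\eqref{eq:A_cases} as the symmetric form $c_j\bar A_{i,j}+c_i\bar A_{j,i}=0$, and read off the rows of $\mathbf{M}$ and entries of $\mathbf{b}$. Your version is slightly more explicit about the bookkeeping (that the $i<j$ equation subsumes both off-diagonal cases and that orthogonality of $U$ gives the iff), but the argument is the same index-matching computation.
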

\begin{proof}
    Let $v\in \mathbb{R}^{n^2}$ and take $A=\text{vec}^{-1}(v)$. Suppose $A\in \mathcal{S}_\Gamma$. From~\eqref{eq:A_cases}, if $i\neq j$ then
    $\sum_{\ell = 1}^n\sum_{k = 1}^n U_{\ell, i}A_{\ell, k}U_{k,j} = (U^T A U)_{i,j} =\bar A_{i,j} = -\frac{c_i}{c_j}\bar A_{j,i} = -\frac{c_i}{c_j}\sum_{\ell = 1}^n\sum_{k = 1}^n U_{\ell, j}A_{\ell, k}U_{k,i}$,
    from which it follows that
    \begin{equation}\label{eq: isom proof 1}
        \sum_{\ell = 1}^n\sum_{k = 1}^n (c_j U_{\ell, i}U_{k,j} + c_i U_{\ell, j}U_{k,i})A_{\ell, k} = 0.
    \end{equation}
    Similarly, when $i=j$,
    \begin{equation}\label{eq: isom proof 2}
        \bar A_{i,j} = \sum_{\ell = 1}^n\sum_{k = 1}^n U_{\ell, i}U_{k,j}A_{\ell, k} = \frac{-1}{2c_i}.
    \end{equation}
    Equations~\eqref{eq: isom proof 1},\eqref{eq: isom proof 2} are precisely the $\frac{n(n+1)}{2}$ identities that are encoded in $\mathbf M$ and $\mathbf b$, which implies that $\mathbf M v = \mathbf b$.

    Following the same argument, for any $v\in\mathbb{R}^{n^2}$, if $\mathbf M v = \mathbf b$, then $\text{vec}^{-1}(v)\in\mathcal{S}_\Gamma$.
\end{proof}

\section{$L_1$-regularization on $\mathcal{S}_{\Gamma}$ and sparse network reconstruction}
\label{sec:L1Reg}

Assume that the interaction network we are reconstructing is sparse. Then, a reasonable way to recover a particular solution from $\mathcal{S}_\Gamma$ is to look for matrices with minimum $L_1$ norm; this is a common approach for promoting sparsity of solutions~\cite{candes2005l1}. We pose the problem of minimizing the $L_1$ norm of the vectorized matrices as follows:
\begin{align}
        \text{Minimize   } & \lVert v\rVert_1 \label{eq:l1_prob}\\
        \text{Subject to      } & \mathbf Mv=\mathbf b \,. \nonumber
\end{align}
In turn,~\eqref{eq:l1_prob} can be transformed into a Linear Programming (LP) problem by taking a new variable $u\in \mathbb{R}^{n^2}$ and a vector $\mathfrak{s}=(\mathbf{1}_{n^2}, \mathbf{0}_{n^2})$, and defining the LP problem:
\begin{align}
    &\text{Minimize   } & {\mathfrak{{s}}}^T&\begin{bmatrix} u \\ v \end{bmatrix} \label{eq:l1_lp_prob}\\
    &\text{Subject to} & \begin{bmatrix}
        \mathbf{0} & \mathbf M
    \end{bmatrix}&\begin{bmatrix} u \\ v \end{bmatrix} = \mathbf b \nonumber\\
    &\text{and} & \begin{bmatrix}
        -I & I \\
        -I & -I
    \end{bmatrix}&\begin{bmatrix} u \\ v \end{bmatrix} \leq \mathbf{0}_{2n^2}. \nonumber
\end{align}
The LP problem \eqref{eq:l1_lp_prob} minimizes the objective $\sum_{(i,j)} u_{(i,j)}$ subject to the restriction that $\text{vec}^{-1}(v)\in\mathcal{S}_\Gamma$ and $\lvert v_{(i,j)} \rvert\leq u_{(i,j)}$ for all $i,j\in[n]$. 
A solution to the LP problem (\ref{eq:l1_lp_prob}) immediately gives a solution to the problem (\ref{eq:l1_prob}) and \textit{vice-versa} -- in this sense the two problems are equivalent \cite{candes2005l1}.

Solving~\eqref{eq:l1_lp_prob} will return a sparse solution from $\mathcal{S}_\Gamma$. However, because there might be (and usually are) many sparse matrices in $\mathcal{S}_\Gamma$, the solution to the $L_1$-regularization problem might still be very different from the true network adjacency matrix, including under the sparsity assumption and mostly depending on the initial problem data. In the following section we present a way to constrain the optimization problem by including approximate prior knowledge on edge existence to bias the reconstruction toward solutions that are consistent with the priors. 

\section{Including priors on edge existence}
\label{sec:AddPriors}

\begin{figure}
    \centering
    \includegraphics[width = 0.49\textwidth]{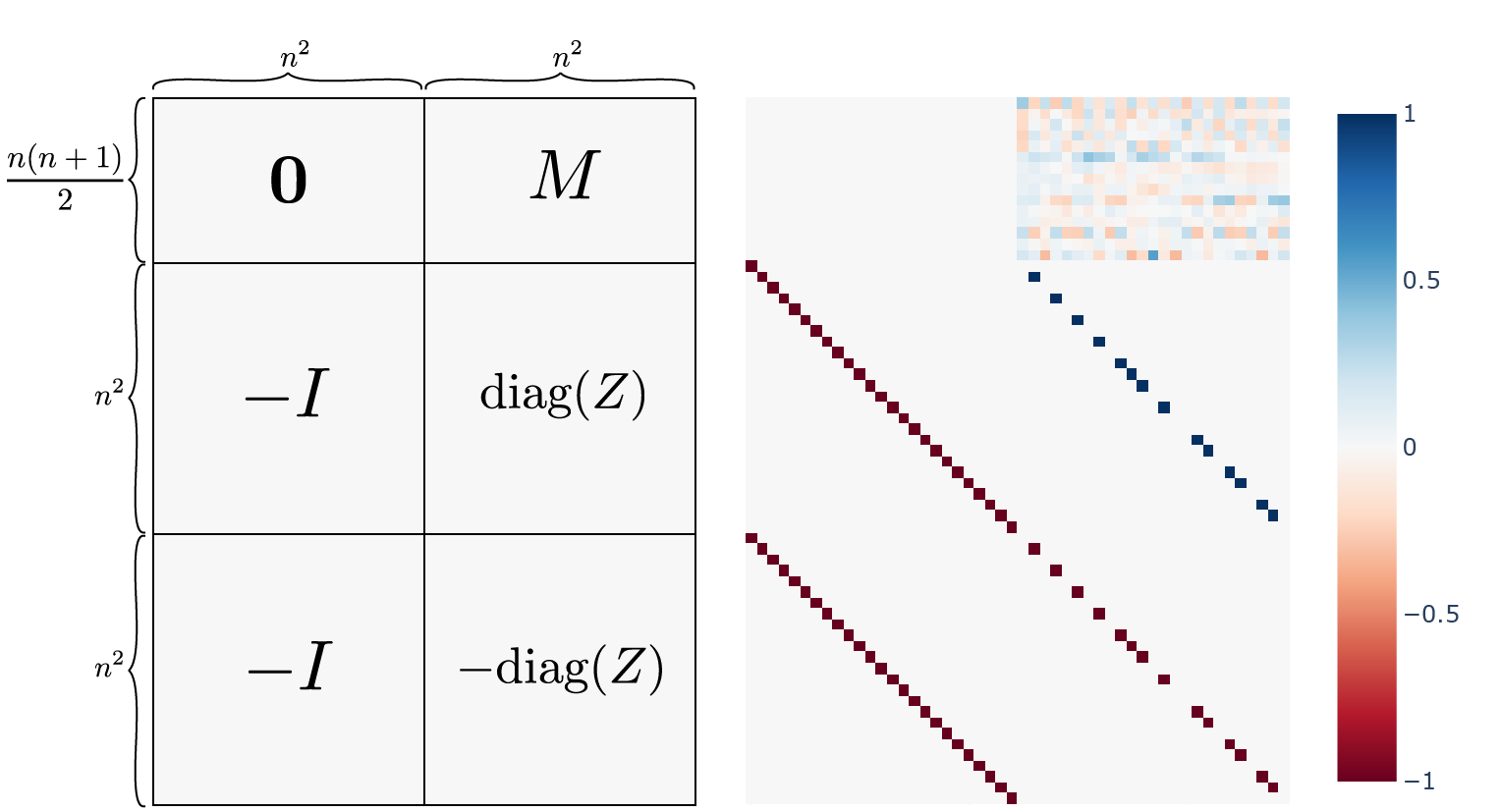}
    \caption{Diagram illustrating the way the optimization problem (\ref{eq:l1_prob}) can be encoded as a linear programming problem. The top $\frac{n(n+1)}{2}\times 2n^2$ sub-matrix is used as an equality constraint, which restricts the search to matrices on $\mathcal{S}_\Gamma$. The $2n^2\times 2n^2$ sub-matrix bellow is used as an inequality constraint which  ensures that for each $\ell, k$ we have $\lvert v_{(\ell, k)} \rvert \leq u_{(\ell, k)}$. An example of this encoding for the case $n=5$ is show on the right.}
    \label{fig:SCSreduction}
\end{figure}

The idea for including priors comes from the observation that if we knew that exactly $p$ directed edges (and which ones) existed in the network, then $A$ would lie in $\mathcal{R}_\Gamma\subseteq \mathbf{R}^{n\times n}$ of dimension $p$, i.e., the subspace parameterized by the $p$ nonzero elements of $A$. 
By Proposition~\ref{prop:l1 to lp}, $A\in \mathcal{S}_\Gamma$. Thus, $A\in \mathcal{S}_\Gamma\cap \mathcal{R}_\Gamma$, that is, $A$ must lie in the intersection of two affine subspaces of dimension $m=\dim(\mathcal{S}_\Gamma)=\frac{n(n-1)}{2}$ and $p=\dim(\mathcal{R}_\Gamma)$, respectively.

It is well known~(a consequence of Theorem 2.43 in\cite{axler2023linear}) that the transversal intersection of two linear subspaces of $\mathbb{R}^{n\times n}$ of dimension $p$ and $m$, with $m+p\leq n^2$ is either empty or a point.                            
This means that if we have \textit{exact} knowledge of which edges exist in the network and the network has a sufficiently small number of connections (i.e., $m+p\leq n^2$), then solving~\eqref{eq:l1_lp_prob} restricted to $\mathcal{R}_\Gamma$ would return $A$ {\it exactly}. However, given only imperfect knowledge of edge existence and if $m+p\leq n^2$, then our noisy approximation $\hat{\mathcal{R}}_\Gamma$ of $\mathcal{R}_\Gamma$ and $\mathcal{S}_\Gamma$ will almost surely not intersect at all. In spite of this, looking for the matrix $A\in \mathcal{S}_\Gamma$ with minimal distance to $\hat{\mathcal{R}}_\Gamma$ should bias the solution~\eqref{eq:l1_lp_prob} toward the real $A$. This is the approach we use here.
 

Based on this idea we define a vector with entries representing the relative importance 
assigned to each connection during the optimization, $Z\in[0,1]^{n^2}$. The objective of the minimization then becomes $\sum_{(i,j)}Z_{(i,j)}\lvert v_{(i,j)} \rvert$. In the simplest case, we can introduce information about the presence of edges in the following way. Let $\mathcal{I}=\{(i,j) \in [n]^2 \mid (i,j) \text{ is a known edge}\}$. Then we can define $Z$ as 
$$
    Z_{(i, j)} = \begin{cases}
        1 &\text{if } (i,j)\notin\mathcal{I} \\
        0 &\text{otherwise.}
    \end{cases}
$$
Written as an LP the problem becomes
\begin{align}
    &\text{Minimize   } & {\mathfrak{s}}^T&\begin{bmatrix} u \\ v \end{bmatrix} \label{eq:l1_lp_prob_wprior}\\
    &\text{Subject to} & \begin{bmatrix}
        \mathbf{0} & \mathbf M
    \end{bmatrix}&\begin{bmatrix} u \\ v \end{bmatrix} = \mathbf b \nonumber\\
    &\text{and} & \begin{bmatrix}
        -I & \text{diag}(Z) \\
        -I & -\text{diag}(Z)
    \end{bmatrix}&\begin{bmatrix} u \\ v \end{bmatrix} \leq \mathbf{0}_{2n^2}. \nonumber
\end{align}
When there are no edge priors, $Z=\mathbf{1}_{n^2}$, and the problem in (\ref{eq:l1_lp_prob_wprior}) reduces to (\ref{eq:l1_lp_prob}). An illustrative example of the constraint matrix in (\ref{eq:l1_lp_prob_wprior}) is shown in Figure \ref{fig:SCSreduction}. Note in Figure \ref{fig:SCSreduction} that this matrix is sparse in most places except the top right corner; thus, optimization methods that take advantage of sparse structure can be employed. Problem (\ref{eq:l1_lp_prob_wprior}) leads to an optimization that prioritizes finding a solution where the edges {\it not in} $\mathcal{I}$ have weight close to zero in absolute value. The values of $Z$ could also be adjusted to correspond with the level of certainty about prior edge existence. For simplicity, in the following we only consider $Z\in\{0,1\}^{n^2}$. 

\section{Edge priors from Transfer Entropy}
\label{sec:TE}


There are various ways to obtain prior knowledge on edge existence, e.g., from expert knowledge or from previous analysis. One generally applicable tool for inferring the existence of directed interactions is Transfer Entropy (TE)\cite{Schreiber_2000}. Given two discrete-time stationary stochastic processes $X$ and $Y$, the TE from $X$ to $Y$ is a measure of the predictive power of $X$ about $Y$ at a \textit{discrete} time $\tt{t}$. In its simplest form TE is defined as \cite{Schreiber_2000}
\begin{align*}
     &T_{X\to Y}({\tt t}) = I(X({\tt t}-1) ; Y({\tt t}) \mid Y({\tt t}-1)) \\ 
               &= H(Y({\tt t})\mid Y({\tt t}-1)) - H(Y({\tt t}) \mid X({\tt t}-1), Y({\tt t}-1)).
\end{align*}
So, $T_{X\to Y}(\tt t)$ is the mutual information $I$ of $X({\tt t}-1)$ and $Y({\tt t})$ conditioned on $Y({\tt t}-1)$, where
$$
    {\scriptstyle I(X;Y\mid Z) = \int_{\mathcal{Z}}\int_{\mathcal{Y}}\int_{\mathcal{X}}p_{X,Y,Z}(x,y,z)\ln\left(\frac{p_Z(z)p_{X,Y,Z}(x,y,z)}{p_{X,Z}(x,z)p_{Y,Z}(y,z)}\right)dx\;dy\;dz},
$$
$p$ are probability density functions of the joint distributions of variables represented by the subscripts, and $\mathcal{X},\mathcal{Y}, \mathcal{Z}$ are the supports of $X,Y,Z$ respectively. $T_{X \to Y}({\tt t})$ can equivalently be written in terms of $H(X\mid Y)=-\int_{\mathcal{X}}\int_{\mathcal{Y}}p_{X,Y}(x,y)\ln(\frac{p_{X,Y}(x,y)}{p_Y(y)})$. If the processes are assumed to be stationary then TE is the same for all values of $\tt t$ and we can simply write $T_{X\to Y}$.

This measure has been shown to be a good indicator of directed interactions between pairs of noisy dynamical variables~\cite{Bossomaier_Barnett_Harré_Lizier_2016, Rahimzamani_Kannan_2016, Novelli_Wollstadt_Mediano_Wibral_Lizier_2019}, and can be seen as a nonlinear generalization of Granger causality \cite{granger_te_2009}. It can also be extended to take into account factors such as redundancy and synergy in a partial information decomposition approach~\cite{Williams_Beer_2011}. 

A simple way to reduce the effects of redundancy when inferring connectivity is to use {\it conditional} TE: $T_{X\to Y\mid Z}$. The definition of $T_{X\to Y\mid Z}$ is the same as for $T_{X\to Y}$ but conditioned on the past of $Z$. Connectivity can then be inferred from a set of $n$ stochastic processes $X_i$ by calculating $T_{X_i\to X_j\mid X_{(i,j)}}$ for each pair $1\leq i,j\leq n$, where $X_{(i,j)}$ is the joint distribution of all the variables except $X_i$ and $X_j$. Then we say there is a significant interaction from $X_i$ to $X_j$ if $T_{X_i\to X_j\mid X_{(i,j)}}$ is higher than what would be expected by pure chance (using suitable data shuffling and significance testing).
However, this approach is usually not feasible for large networks due to the amount of required data.

To bypass this problem in cases where the number of edges in the network is small (i.e., the connectivity matrix is sparse), Novelli et al. \cite{Novelli_Wollstadt_Mediano_Wibral_Lizier_2019} proposed a greedy algorithm that finds a set of sources for each target node by {\it i)} creating a candidate set of source nodes with significant TEs when conditioned only on previously selected nodes and {\it ii)} removing all nodes for which TE became non-significant due to redundancy with newly added nodes.
This algorithm (and others) is implemented in the IDTxl Python package \cite{Wollstadt_Lizier_Vicente_Finn_Martínez-Zarzuela_Mediano_Novelli_Wibral_2019}.

In order to keep inference fast, we adapted the greedy algorithm in \cite{Novelli_Wollstadt_Mediano_Wibral_Lizier_2019} to a simplified version where only Step i) of the algorithm is used.
This did not seem to drastically impact the performance of the method. However, we observed an increase in false positives, so we introduced a strict upper bound on the number of inferred edges.

\section{Numerical validation}
\label{sec:Val}

\begin{figure}
    \centering
    \includegraphics[width=0.47\textwidth]{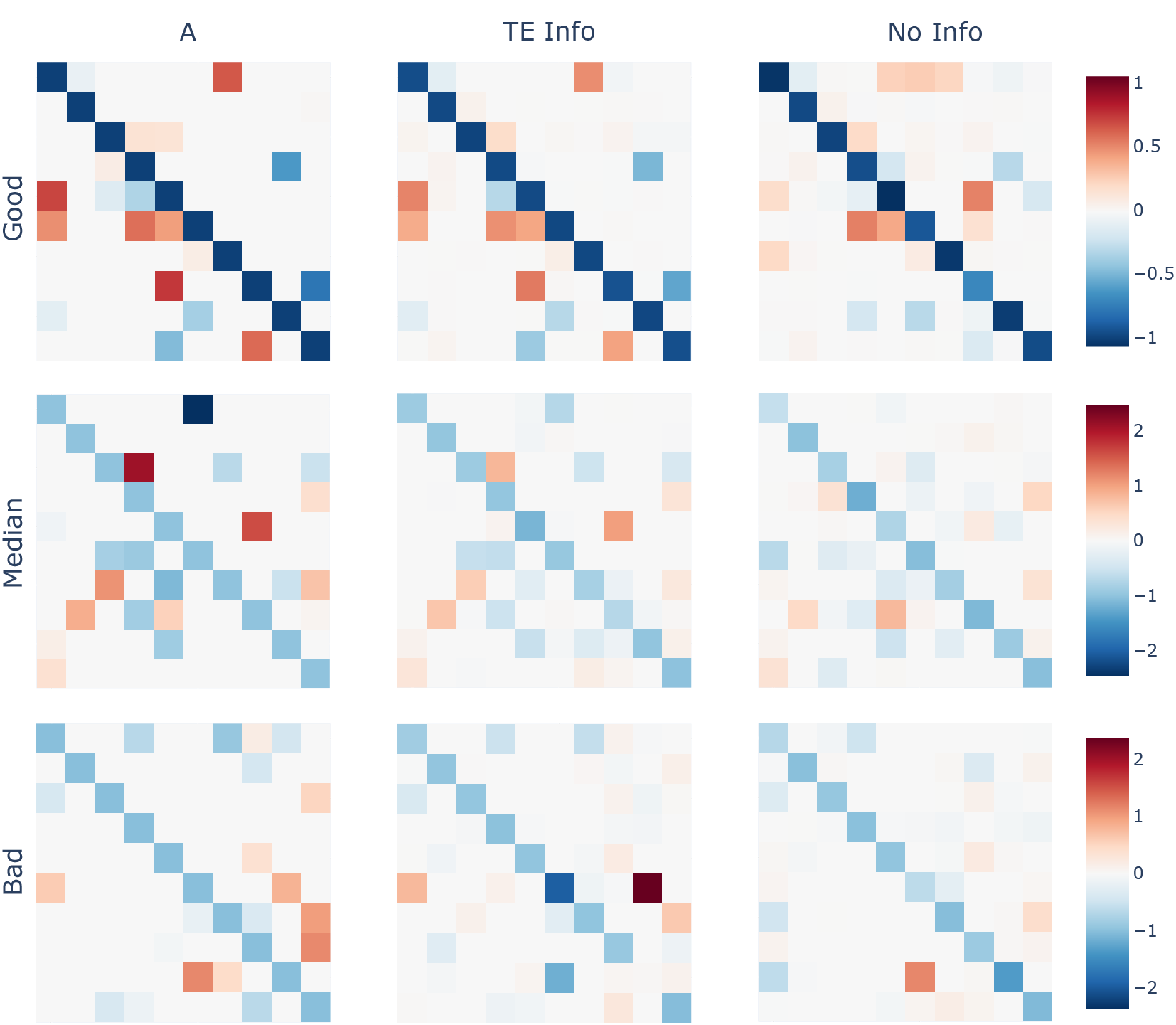}
    \caption{Three examples of reconstructed connected networks using our method with edge priors inferred using TE (middle), and no info (right),  applied to time-series generated from the weakly-nonlinear system in (\ref{eq:sat_sys}). The ground truth connectivity matrix $A$ is show on the left. The top row shows examples of good reconstruction (high alignment between reconstructed matrix and the ground truth), the middle row shows reconstruction with median alignment, and the bottom row shows a bad reconstruction (low alignment). 
    The alignments for the reconstructions with TE Info were $0.99$, $0.90$ and $0.31$ for the good, median, and bad respectively. For the No Info the alignments were $0.43$, $0.42$, and $0.53$ for the good, median, and bad respectively. Examples were selected from $600$ simulated dynamics of matrices with $20$ edges and different $\varepsilon$ values by sorting according to alignment of TE Info reconstruction and selecting the best, median, and worst reconstructions.}
    \label{fig:goodbad_examples}
\end{figure}

\begin{figure}
    \centering
    \includegraphics[width=0.45\textwidth]{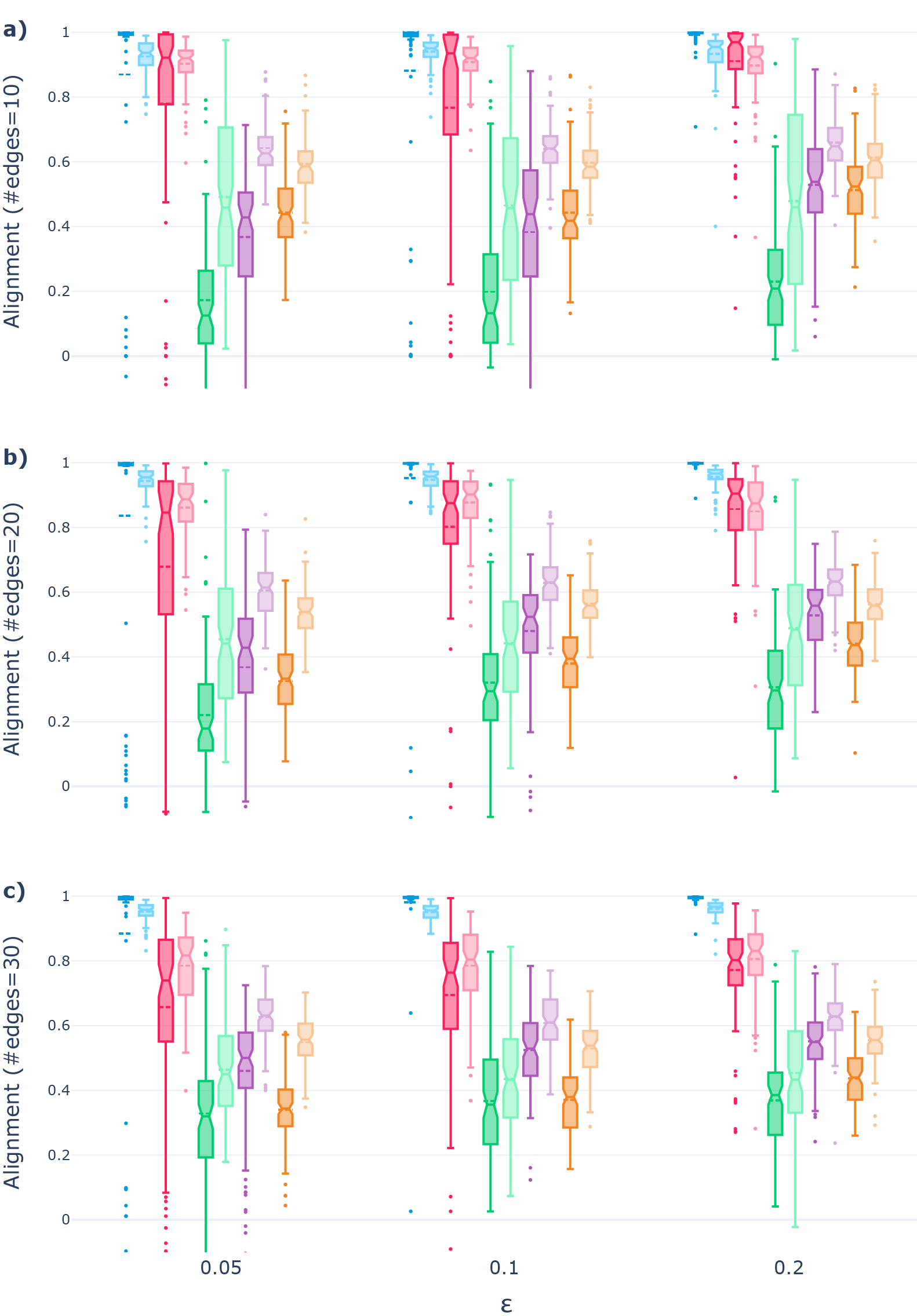}
    \caption{Plots showing the performance of our method for $100$ samples of random Hurwitz matrices of size $10\times 10$ generated with different values of $\varepsilon$, and with \textbf{a)} $10$, \textbf{b)} $20$, \textbf{c)} $30$ edges. The $L_1$ optimization methods compared are: Full Info (blue), TE Info (red), No Info (green) on edge existence. Performance of precision and correlation matrices were added as reference in purple and orange respectively. Darker colors correspond to reconstructions for data simulated using the linear system in (\ref{eq:lin_sys}), and lighter colors for the weakly-nonlinear system in (\ref{eq:sat_sys}). Notches in boxes represent median values and dotted lines mean values. Boxes encompass $50\%$ of the data points. Isolated points are outliers. }
    \label{fig:epsVSalign}
\end{figure}

We tested our method on randomly generated matrices of the form
$$
    A = -I + \frac{1-\varepsilon}{b_{max}}B \, ,
$$
where $B$ is is an Erdős–Rényi graph with normally distributed edge weights, $\varepsilon\in(0,1]$ is a parameter that controls the spectrum of $A$ by weighting $B$, and $b_{max}=\max\{\Re(\lambda)\mid \lambda\in\sigma(B)\}$. We get $a_{max}=\max\{\Re(\lambda)\mid \lambda\in\sigma(A)\}=-1+\varepsilon$.

\subsection{Implementation}

For numerical validation, our algorithm was implemented in standard Python libraries. The code can be found at \textit{https://github.com/ianxul/SDE-net-reconstruction}. For estimating conditional TE interactions, we used the GPU-enabled estimator in IDTxl package \cite{Wollstadt_Lizier_Vicente_Finn_Martínez-Zarzuela_Mediano_Novelli_Wibral_2019}, and the adapted algorithm described in Section~\ref{sec:TE}.
We further simplified the permutation-based significance test suggested in~\cite{Novelli_Wollstadt_Mediano_Wibral_Lizier_2019} by using a less computationally intensive threshold heuristic similar to~\cite{Rahimzamani_Kannan_2016}. $L_1$-optimization was performed using CVXPY package~\cite{diamond2016cvxpy} with the splitting conic solver (SCS).

Simulations of SDEs were run with Julia's DifferentialEquations solver. We used the LambaEM solver for SDEs with a step size of $0.1$ for $10,000$ time steps (giving $100,000$ data points for each simulation).
We applied our method to the data generated from the linear SDE system~(\ref{eq:lin_sys}) and from the weakly nonlinear (saturated network interactions) SDE system
\begin{equation}
    \label{eq:sat_sys}
    \frac{dx_i}{dt} = A_{ i,i } x_i + \sum_{ i \neq j}( \tanh(A_{ i,j } x_j) ) + w(t).
\end{equation}

To compare different methods we computed the alignment between the vectorized matrices, excluding the diagonal, as it is not related to network reconstruction:
\begin{equation}
    \label{eq:alignment}
    \text{align}(A,B)=\frac{\sum_{\substack{i\neq j}} A_{i,j}B_{i,j}}{\left(\sum_{\substack{i\neq j}} A_{i,j}^{2}\right)^\frac{1}{2}\left(\sum_{\substack{i\neq j}} B_{i,j}^2\right)^\frac{1}{2}}.
\end{equation}

We compared three cases: {\it i)} ``Full Info'', in which full information on edge existence is given and which should lead to perfect reconstruction; {\it ii)} ``TE Info'', in which edge existence is inferred using TE methods; {\it iii)} ``No Info'', in which we simply perform $L_1$-regularization over $\mathcal{S}_\Gamma$ as in (\ref{eq:l1_lp_prob}). Results are exemplified in Figure~\ref{fig:goodbad_examples} and summarized in Figure~\ref{fig:epsVSalign}.

\subsection{Results and discussion}

Our results show that adding edge existence priors obtained through TE \cite{Novelli_Wollstadt_Mediano_Wibral_Lizier_2019} to our reconstruction method improves the alignment of reconstructions considerably when compared against doing $L_1$-regularization with no prior information, and against the precision and correlation matrices, as can be seen in Figures~\ref{fig:goodbad_examples} and~\ref{fig:epsVSalign}. Interestingly, $L_1$-regularization with no edge priors performed worse than the symmetric methods for both the linear and weakly nonlinear cases.
A possible reason is that for a given covariance matrix the solution space $\mathcal S_\Gamma$ might contain many matrices with small $L_1$ norm and, due to noise, one of these (potentially very far from the true solution) can turn into a global minimum.
We observed, for instance, cases where edges were flipped (see the top right matrix in Figure \ref{fig:goodbad_examples}). By biasing the optimization towards matrices consistent with prior knowledge we increase the likelihood that the correct sparse solution is selected.

All methods performed increasingly worse as the ground truth solution $A$ became less sparse or as $\varepsilon$ became smaller (i.e., as the spectrum of $A$ got closer to the imaginary axis).
This could simply be a result of not having an optimal algorithm for inferring edges using TE, since the performance of the method with full prior knowledge of existing edges (``Full Info") is mostly unaffected by changes in $\varepsilon$ and edge number.
Applying the full IDTxl method\cite{Novelli_Wollstadt_Mediano_Wibral_Lizier_2019} could improve results further due to more rigorous significance tests, a redundant edge cleanup pass and no hard limit on inferred edges. The mean and median performance of the ``TE Info" method was always strictly better than all other methods we tested (apart, of course, from ``Full Info") by a significant margin (bootstrap test $p<10^{-5}$).

Another interesting observation is that introducing a weak nonlinearity in the model usually led to only minor changes in the median performance but often drastically reduced the performance variance, and in some cases increased the median performance as well. In other words, our method performed more consistently (and sometimes better) in the presence of weakly nonlinear network interactions as compared to a purely linear network model. This effect cannot be explained by the use of nonlinear edge inference methods like TE, since fully linear methods like ``No Info" and those based on precision and correlation matrices also benefited from the presence of weakly nonlinear interactions. We will investigate this interesting phenomenon in future research.

With respect to computational complexity, our model is fast and high performing since it has the computational complexity of LP optimization (not including the prior edge inference with TE, which is highly parallelizable \cite{Wollstadt_Lizier_Vicente_Finn_Martínez-Zarzuela_Mediano_Novelli_Wibral_2019}). This has been shown to be theoretically less than or equal to the current complexity of matrix multiplication \cite{Cohen_Lee_Song_2021}, which is $\approx O(\#^{2.3})$, where $\#$ is the number of variables in the optimization. In our case, the number of variables is $\# = 2n^2$, and so the (theoretical) computational complexity of the method is $\approx O(n^{4.6})$. For reference, solving the LP optimization for a $100$-dimensional system takes about $8$ minutes on a single core CPU node with 2.6 GHz Intel Skylake processor and $16$ GB of memory.
 
\section{Conclusions and perspectives}
\label{sec: conclusions}

We introduced a fast and high performing method to reconstruct directed and weighted networks and analyzed its performance using both linear and weakly nonlinear network models. Future works will aim at improving the heuristics used for the TE inference step. Theoretically it will also be important to understand the role of nonlinearities on network inference, which might be key for application to (usually nonlinear) neuronal and other kind of biological networks. Measures of the intrinsic timescale of each node, estimated through auto-correlation analysis~\cite{rossi2021invariant}, can further be used to further constraint the $L_1$-optimization step. This will also require consideration of the role of network self-loops, i.e., the diagonal elements of $A$, more explicitly in the method. All these ideas will soon be implemented and tested on experimental neuroscience (EEG) datasets.

\bibliographystyle{IEEEtran}
\bibliography{Ref}

\begin{thebibliography}{10}
\providecommand{\url}[1]{#1}
\csname url@samestyle\endcsname
\providecommand{\newblock}{\relax}
\providecommand{\bibinfo}[2]{#2}
\providecommand{\BIBentrySTDinterwordspacing}{\spaceskip=0pt\relax}
\providecommand{\BIBentryALTinterwordstretchfactor}{4}
\providecommand{\BIBentryALTinterwordspacing}{\spaceskip=\fontdimen2\font plus
\BIBentryALTinterwordstretchfactor\fontdimen3\font minus
  \fontdimen4\font\relax}
\providecommand{\BIBforeignlanguage}[2]{{%
\expandafter\ifx\csname l@#1\endcsname\relax
\typeout{** WARNING: IEEEtran.bst: No hyphenation pattern has been}%
\typeout{** loaded for the language `#1'. Using the pattern for}%
\typeout{** the default language instead.}%
\else
\language=\csname l@#1\endcsname
\fi
#2}}
\providecommand{\BIBdecl}{\relax}
\BIBdecl

\bibitem{Liégeois_Santos_Matta_Ville_Sayed_2020}
R.~Liégeois, A.~Santos, V.~Matta, D.~V.~D. Ville, and A.~H. Sayed,
  ``Revisiting correlation-based functional connectivity and its relationship
  with structural connectivity,'' \emph{Network Neuroscience}, vol.~4, no.~4,
  p. 1235–1251, 2020.

\bibitem{Lin_Das_Krishnan_Bazhenov_Sejnowski_2017}
T.~W. Lin, A.~Das, G.~P. Krishnan, M.~Bazhenov, and T.~J. Sejnowski,
  ``Differential covariance: A new class of methods to estimate sparse
  connectivity from neural recordings,'' \emph{Neural Computation}, vol.~29,
  no.~10, p. 2581–2632, 2017.

\bibitem{Mohanty_Sethares_Nair_Prabhakaran_2020}
R.~Mohanty, W.~A. Sethares, V.~A. Nair, and V.~Prabhakaran,
  ``\BIBforeignlanguage{en}{Rethinking measures of functional connectivity via
  feature extraction},'' \emph{\BIBforeignlanguage{en}{Scientific Reports}},
  vol.~10, no.~1, p. 1298, 2020.

\bibitem{Schreiber_2000}
T.~Schreiber, ``Measuring information transfer,'' \emph{Physical Review
  Letters}, vol.~85, no.~2, p. 461–464, Jul. 2000.

\bibitem{Lobier_Siebenhühner_Palva_Palva_2014}
M.~Lobier, F.~Siebenhühner, S.~Palva, and J.~M. Palva,
  ``\BIBforeignlanguage{en}{Phase transfer entropy: A novel phase-based measure
  for directed connectivity in networks coupled by oscillatory interactions},''
  \emph{\BIBforeignlanguage{en}{NeuroImage}}, vol.~85, p. 853–872, 2014.

\bibitem{Ursino_Ricci_Magosso_2020}
M.~Ursino, G.~Ricci, and E.~Magosso, ``Transfer entropy as a measure of brain
  connectivity: A critical analysis with the help of neural mass models,''
  \emph{Frontiers in Computational Neuroscience}, vol.~14, 2020.

\bibitem{Novelli_Wollstadt_Mediano_Wibral_Lizier_2019}
L.~Novelli, P.~Wollstadt, P.~Mediano, M.~Wibral, and J.~T. Lizier,
  ``Large-scale directed network inference with multivariate transfer entropy
  and hierarchical statistical testing,'' \emph{Network Neuroscience}, vol.~3,
  no.~3, pp. 827--847, 2019.

\bibitem{Timme_Lapish_2018}
N.~M. Timme and C.~Lapish, ``\BIBforeignlanguage{en}{A tutorial for information
  theory in neuroscience},'' \emph{\BIBforeignlanguage{en}{eneuro}}, vol.~5,
  no.~3, pp. ENEURO.0052--18.2018, 2018.

\bibitem{Sharma_Bucci_Brahma_Varshney_2017}
P.~Sharma, D.~J. Bucci, S.~K. Brahma, and P.~K. Varshney, ``Communication
  network topology inference via transfer entropy,'' \emph{IEEE Transactions on
  Network Science and Engineering}, vol.~7, no.~1, p. 562–575, 2017.

\bibitem{friston2003dynamic}
K.~J. Friston, L.~Harrison, and W.~Penny, ``Dynamic causal modelling,''
  \emph{Neuroimage}, vol.~19, no.~4, pp. 1273--1302, 2003.

\bibitem{casti2023dynamic}
U.~Casti, G.~Baggio, D.~Benozzo, S.~Zampieri, A.~Bertoldo, and A.~Chiuso,
  ``Dynamic brain networks with prescribed functional connectivity,'' in
  \emph{2023 62nd IEEE Conference on Decision and Control (CDC)}.\hskip 1em
  plus 0.5em minus 0.4em\relax IEEE, 2023, pp. 709--714.

\bibitem{socha2007linearization}
L.~Socha, \emph{Linearization methods for stochastic dynamic systems}.\hskip
  1em plus 0.5em minus 0.4em\relax Springer Science \& Business Media, 2007,
  vol. 730.

\bibitem{appliedSDEs}
S.~Särkkä and A.~Solin, \emph{Applied Stochastic Differential Equations},
  1st~ed., ser. Institute of Mathematical Statistics textbooks.\hskip 1em plus
  0.5em minus 0.4em\relax Cambridge University Press, 2019, vol.~10.

\bibitem{fernando1981solution}
K.~Fernando and H.~Nicholson, ``Solution of lyapunov equation for the state
  matrix,'' \emph{Electronics Letters}, vol.~17, no.~5, pp. 204--205, 1981.

\bibitem{candes2005l1}
E.~Candes and J.~Romberg, ``l1-magic: Recovery of sparse signals via convex
  programming,'' \emph{URL: www. acm. caltech. edu/l1magic/downloads/l1magic.
  pdf}, vol.~4, no.~14, p.~16, 2005.

\bibitem{axler2023linear}
S.~Axler, \emph{Linear algebra done right}.\hskip 1em plus 0.5em minus
  0.4em\relax Springer Nature, 2023.

\bibitem{Bossomaier_Barnett_Harré_Lizier_2016}
T.~Bossomaier, L.~Barnett, M.~Harré, and J.~T. Lizier,
  \emph{\BIBforeignlanguage{en}{An Introduction to Transfer Entropy}}.\hskip
  1em plus 0.5em minus 0.4em\relax Cham: Springer International Publishing,
  2016.

\bibitem{Rahimzamani_Kannan_2016}
A.~Rahimzamani and S.~Kannan, ``\BIBforeignlanguage{en}{Network inference using
  directed information: The deterministic limit},'' in
  \emph{\BIBforeignlanguage{en}{2016 54th Annual Allerton Conference on
  Communication, Control, and Computing (Allerton)}}.\hskip 1em plus 0.5em
  minus 0.4em\relax Monticello, IL, USA: IEEE, 2016, p. 156–163.

\bibitem{granger_te_2009}
L.~Barnett, A.~B. Barrett, and A.~K. Seth, ``Granger causality and transfer
  entropy are equivalent for gaussian variables,'' \emph{Physical Review
  Letters}, vol. 103, no.~23, p. 238701, 2009.

\bibitem{Williams_Beer_2011}
P.~L. Williams and R.~D. Beer, ``Generalized measures of information
  transfer,'' no. arXiv:1102.1507, 2011, arXiv:1102.1507 [physics].

\bibitem{Wollstadt_Lizier_Vicente_Finn_Martínez-Zarzuela_Mediano_Novelli_Wibral_2019}
P.~Wollstadt, J.~T. Lizier, R.~Vicente, C.~Finn, M.~Martínez-Zarzuela,
  P.~Mediano, L.~Novelli, and M.~Wibral, ``Idtxl: The information dynamics
  toolkit xl: a python package for the efficient analysis of multivariate
  information dynamics in networks,'' \emph{Journal of Open Source Software},
  vol.~4, no.~34, p. 1081, 2019, arXiv:1807.10459 [cs, math].

\bibitem{diamond2016cvxpy}
S.~Diamond and S.~Boyd, ``{CVXPY}: {A} {P}ython-embedded modeling language for
  convex optimization,'' \emph{Journal of Machine Learning Research}, vol.~17,
  no.~83, pp. 1--5, 2016.

\bibitem{Cohen_Lee_Song_2021}
M.~B. Cohen, Y.~T. Lee, and Z.~Song, ``Solving linear programs in the current
  matrix multiplication time,'' \emph{Journal of the ACM (JACM)}, vol.~68,
  no.~1, p. 1–39, 2021.

\bibitem{rossi2021invariant}
R.~Rossi-Pool, A.~Zainos, M.~Alvarez, S.~Parra, J.~Zizumbo, and R.~Romo,
  ``Invariant timescale hierarchy across the cortical somatosensory network,''
  \emph{Proceedings of the National Academy of Sciences}, vol. 118, no.~3, p.
  e2021843118, 2021.

\end{thebibliography}


\end{document}